\newcommand{\q}[1]{``#1''}
\newcommand{\ket}[1]{\left|\, #1 \, \right\rangle}
\newcommand{\e}{\text{e}}
\newcommand{\imag}{i}
\newcommand{\ZZ}{\mathbb{Z}}
\newcommand{\hidepart}[1]{}
\newtheorem{theorem}{Theorem}
\newtheorem{lemma}{Lemma}
\newtheorem{claim}{Claim}
\newtheorem{definition}{Definition}
\newtheoremstyle{TheoremNum}
        {\topsep}{\topsep}
        {\itshape}
        {}
        {\bfseries}
        {.}
        { }
        {\thmname{#1}\thmnote{ \bfseries #3}}
\theoremstyle{TheoremNum}
\newcommand{\footremember}[2]{%
   \footnote{#2}
    \newcounter{#1}
    \setcounter{#1}{\value{footnote}}%
}
\newcommand{\footrecall}[1]{%
    \footnotemark[\value{#1}]%
}
\title{Quantum algorithms for computing short discrete logarithms and factoring RSA integers}
\author{Martin Ekerå\footremember{kth}{KTH Royal Institute of Technology, SE-100 44 Stockholm, Sweden.}$^,\,$\footremember{ncsa}{Swedish NCSA, Swedish Armed Forces, SE-107 85 Stockholm, Sweden.} \and Johan Håstad\footrecall{kth}}
\begin{document}
\maketitle

\begin{abstract}
  In this paper we generalize the quantum algorithm for computing short discrete
  logarithms previously introduced by Ekerå \cite{ekera} so as to allow for
  various tradeoffs between the number of times that the algorithm need be executed
  on the one hand,
  and the complexity of the algorithm and the requirements it imposes on the
  quantum computer on the other hand.

  Furthermore, we describe applications of algorithms for computing short
  discrete logarithms. In particular, we show how other important problems such as
  those of factoring RSA integers and of finding the order of groups under
  side information may be recast as short discrete logarithm problems.
  This immediately gives rise to an algorithm for factoring RSA integers that
  is less complex than Shor's general factoring algorithm in the sense that it
  imposes smaller requirements on the quantum computer.

  In both our algorithm and Shor's algorithm, the main hurdle is to compute
  a modular exponentiation in superposition. When factoring an $n$ bit integer,
  the exponent is of length $2n$ bits in Shor's algorithm, compared to
  slightly more than $n/2$ bits in our algorithm.
\end{abstract}

\section{Introduction}
In a groundbreaking paper \cite{shor1994} from 1994, subsequently extended
and revised in a later publication \cite{shor1997}, Shor introduced
polynomial time quantum computer algorithms for factoring integers over
$\mathbb Z$ and for computing discrete logarithms in the multiplicative group
$\mathbb F_p^*$ of the finite field $\mathbb F_p$.

Although Shor's algorithm for computing discrete logarithms was originally
described for $\mathbb F_p^*$, it may be generalized to any finite cyclic
group, provided the group operation may be implemented efficiently using
quantum circuits.

\subsection{Recent work}
Ekerå \cite{ekera} has introduced a modified version of Shor's algorithm for
computing short discrete logarithms in finite cyclic groups.

Unlike Shor's original algorithm, this modified algorithm does not require the
order of the group to be known. It only requires the logarithm to be short; i.e.
it requires the logarithm to be small in relation to the group order.

The modified algorithm is less complex than Shor's general algorithm when the
logarithm is short. This is because the main hurdle in
both algorithms is to compute a modular exponentiation in superposition.

In the case where the group order is of length $n$ bits and
the logarithm sought is of length $m \lll n$ bits bits, Ekerå's algorithm
exponentiates two elements to exponents of size $2m$ bits and $m$ bits
respectively. In Shor's algorithm, both exponents are instead of size $n \ggg m$ bits.

This difference is important since it is seemingly hard to build and operate large and
complex quantum computers. If the complexity of a quantum algorithm may be
reduced, in terms of the requirements that it imposes on the quantum computer,
this may well mean the difference between being able to execute the algorithm
and not being able to execute the algorithm.

\subsection{Our contributions in this paper}
In this paper, we generalize the algorithm of Ekerå for computing
short discrete logarithms by considering the setting where the quantum algorithm
is executed multiple times to yield multiple partial results.  This
enables us to further reduce the size of the exponent to only slightly more
than $m$ bits.

We then combine these partial results using
lattice-based techniques in a classical post-processing stage to yield the discrete
logarithm.
This allows for tradeoffs to be made between the number of times that the algorithm need be executed
on the one hand,
and the complexity of the algorithm and the requirements that it imposes on the
quantum computer on the other hand.

Furthermore, we describe applications of algorithms for computing short
discrete logarithms. In particular, we show how other important problems such as
those of factoring RSA integers and of finding the order of groups under
side information may be recast as short discrete logarithm problems.
By RSA integer we mean an integer that is the product of
two primes of similar size.

This immediately gives rise to an algorithm for factoring RSA integers that is
less complex than Shor's original general factoring algorithm in terms of the
requirements that it imposes on the quantum computer.

When factoring an $n$ bit integer using Shor's algorithm an exponentiation
is performed to an exponent of length $2n$ bits. In our algorithm, the exponent
is instead of length $(\frac 12 +\frac 1s) n$ bits where $s \ge 1$ is a
parameter that may assume any integer value.
As we remarked in the previous section, this reduction in complexity may well mean
the difference between being able to execute and not being able to execute the
algorithm.

\subsection{Overview of this paper}
In section \ref{section-notation} below we introduce some notation
and in section \ref{section-quantum-computing} we provide
a brief introduction to quantum computing.

In section \ref{section-computing-short-discrete-logarithms} we proceed to
describe our generalized algorithm for computing short discrete logarithms and
in section \ref{section-applications} we discuss interesting applications
for our algorithm and develop a factoring algorithm for RSA integers.
We conclude the paper and summarize our results in section \ref{section-summary-conclusion}.

\section{Notation}
\label{section-notation}
In this section, we introduce some notation used throughout this paper.

\begin{itemize}
  \item $u \text{ mod } n$ denotes $u$ reduced modulo $n$ and constrained to the
  interval
  \begin{align*}
    0 \le u \text{ mod } n < n.
  \end{align*}

  \item $\{ u \}_n$ denotes $u$ reduced modulo $n$ and constrained to the
  interval
  \begin{align*}
    -n/2 \le \{ u \}_n < n/2.
  \end{align*}

  \item $\left| \, a + \imag b \, \right| = \sqrt{a^2 + b^2}$ where
  $a, b \in \mathbb R$ denotes the Euclidean norm of $a + \imag b$ which is
  equivalent to the absolute value of $a$ when $b$ is zero.

  \item If $\vec u = \left( u_0, \: \ldots, \: u_{n-1} \right) \in \mathbb R^n$ is a
  vector then
  \begin{align*}
    \left| \, \vec u \, \right| = \sqrt{ \, u_0^2 + \ldots + u_{n-1}^2 \,}
  \end{align*}
  denotes the Euclidean norm of $\vec u$.
\end{itemize}

\section{Quantum computing}
\label{section-quantum-computing}
In this section, we provide a brief introduction to quantum computing.
The contents of this section is to some extent a layman's description of
quantum computing, in that it may leave out or overly simplify important details.

There is much more to be said on the topic of quantum computing. However, such
elaborations are beyond the scope of this paper. For more information,
the reader is instead referred to \cite{hirvensalo}. The extended
paper  \cite{shor1997} by Shor also contains a very good introduction and many
references to the literature.

\subsection{Quantum systems}
In a classical electronic computer, a register that consists of $n$ bits may
assume any one of $2^n$ distinct states $j$ for $0 \le j < 2^n$. The current
state of the register may be observed at any time by reading the register.

In a quantum computer, information is represented using qubits; not bits. A
register of $n$ qubits may be in a superposition of $2^n$ distinct states. Each
state is denoted $\ket{j}$ for $0 \le j < 2^n$ and a superposition of states,
often referred to as a quantum system, is written as a sum
\begin{align*}
	\ket{\Psi} = \sum_{j \, = \, 0}^{2^n - 1} c_j \ket{j}
	\quad \text{ where } \quad
	c_j \in \mathbb C
	\quad \text{ and } \quad
	\sum_{j \, = \, 0}^{2^n - 1} | \, c_j \,|^2 = 1
\end{align*}
that we shall refer to as the system function.

Each complex amplitude $c_j$ may be written on the form
$c_j = a_j e^{i \theta_j}$, where $a_j \in \mathbb R$ is a non-negative real
amplitude and $0 \le \theta_j < 2 \pi$ is a phase, so the system function may
equivalently be written on the form
\begin{align*}
	\ket{\Psi} = \sum_{j \, = \, 0}^{2^n - 1} a_j e^{\imag \theta_j} \ket{j}
	\quad\quad \text{ where } \quad\quad
	\sum_{j \, = \, 0}^{2^n - 1} a_j^2 = 1.
\end{align*}
\subsection{Measurements}
Similar to reading a register in a classical computer, the qubits in a register
may be observed by measuring the quantum system.

The result of such a measurement is to collapse the quantum system, and hence
the system function, to a distinct state. The probability of the system function
$\ket{\Psi}$ collapsing to $\ket{j}$ is $| \, c_j \, |^2 = a_j^2$.

\subsection{Quantum circuits}
It is possible to operate on the qubits that make up a quantum system using
quantum circuits. Such circuits are not entirely dissimilar from the electrical
circuits used to perform operations on bit registers in classical computers.

Given a quantum system in some known initial state, the purpose of a quantum
circuit is to amplify the amplitudes of a set of desired states, and to
suppress the amplitudes of all other states, so that when the system is
observed, the probability is large that it will collapse to a desired
state.

\subsection{The quantum Fourier transform}
In Shor's algorithms, that are the focus of this paper, the discrete quantum
Fourier transform (QFT) is used to to achieve amplitude amplification by
means of constructive interference.

The QFT maps each state in an $n$ qubit register to
\begin{align*}
\ket{j}
\quad \xrightarrow{\text{QFT}} \quad
\frac{1}{\sqrt{2^n}} \sum_{k \, = \, 0}^{2^n-1} e^{2 \pi \imag \, jk/2^n} \ket{k}
\end{align*}
so the QFT maps the system function
\begin{align*}
\ket{\Psi} = \sum_{j \, = \, 0}^{2^n-1} c_j \ket{j}
\quad \xrightarrow{\text{QFT}} \quad
\frac{1}{\sqrt{2^n}} \sum_{j \, = \, 0}^{2^n-1} \sum_{k \, = \, 0}^{2^n-1} c_j \, e^{2 \pi \imag \, jk/2^n} \ket{k}.
\end{align*}
\noindent

\subsubsection{Constructive interference}
If the above system is observed, the probability of it collapsing to $k$ is
\begin{align*}
\frac{1}{2^n}
\cdot
\left| \,
\sum_{j \, = \, 0}^{2^n-1} c_j \, e^{2 \pi \imag \, jk/2^n}
\, \right|^2.
\end{align*}

\noindent
Perceive the terms in the sum as vectors in $\mathbb{C}$.
If the vectors are
point in approximately the same direction, then the norm of their sum is likely
to be great giving rise to a large probability. For $k$ such
that this is indeed the case, constructive interference is said to arise.

The claim below summarizes the notion of constructive interference that
we use in this paper.

\label{section-constructive-interference}
\begin{claim}
\label{claim-sum-unit-vectors}
Let $\theta_j$ for $0 \le j < N$ be phase angles such that
$| \, \theta_j \, | \le \frac{\pi}{4}$. Then
\begin{align*}
	\left| \, \sum_{j \, = \, 0}^{N - 1} \e^{\imag \theta_j} \, \right|^2
  \ge
  \frac{N^2}{2}.
\end{align*}
\begin{proof}
\begin{align*}
\left| \, \sum_{j \, = \, 0}^{N - 1} \e^{\imag \theta_j} \, \right|^2
= \,
\left| \, \sum_{j \, = \, 0}^{N - 1} \left(
  \cos \theta_j + \imag \sin \theta_j
\right) \, \right|^2
\ge \,
\left| \, \sum_{j \, = \, 0}^{N - 1} \cos \theta_j \, \right|^2
\ge
\frac{N^2}{2}
\end{align*}
since for $j$ on the interval $0 \le j < N$ we have $| \, \theta_j \, | \le \frac{\pi}{4}$ which implies
\begin{align*}
  \frac{1}{\sqrt{2}} \le \cos \theta_j \le 1
\end{align*}
and the claim follows.
\end{proof}
\end{claim}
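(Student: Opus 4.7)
The plan is to bound the squared modulus from below by the squared real part of the sum, then exploit the fact that the phase angles are constrained so that each cosine is bounded away from zero. Concretely, I would write
\[
\left| \, \sum_{j=0}^{N-1} e^{\imag \theta_j} \, \right|^2
= \left( \sum_{j=0}^{N-1} \cos \theta_j \right)^{\!2} + \left( \sum_{j=0}^{N-1} \sin \theta_j \right)^{\!2},
\]
and simply drop the (nonnegative) imaginary-part contribution.

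Next I would use the hypothesis $|\theta_j| \le \pi/4$. Since $\cos$ is even and decreasing on $[0,\pi/2]$, this gives $\cos \theta_j \ge \cos(\pi/4) = 1/\sqrt{2} > 0$ for every $j$. In particular each term of $\sum_j \cos \theta_j$ is positive, so there is no cancellation in this real sum, and
\[
\sum_{j=0}^{N-1} \cos \theta_j \ge \frac{N}{\sqrt{2}}.
\]
Squaring yields the desired lower bound $N^2/2$.

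There is no real obstacle here; the only thing to be careful about is that we use $|a+\imag b|^2 \ge a^2$ (so the imaginary part is safely discarded) and that the bound on $\cos \theta_j$ is uniform in $j$, which is exactly what the hypothesis $|\theta_j|\le \pi/4$ delivers. The entire argument is a two-line chain of inequalities, which matches the very short proof appropriate for a supporting claim of this kind.
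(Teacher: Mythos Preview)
Your proposal is correct and follows essentially the same route as the paper: bound the squared modulus below by the square of the real part, then use $\cos\theta_j \ge 1/\sqrt{2}$ from $|\theta_j|\le \pi/4$ to obtain $\sum_j \cos\theta_j \ge N/\sqrt{2}$ and hence the bound $N^2/2$.
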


\section{Computing short discrete logarithms}
\label{section-computing-short-discrete-logarithms}
In this section, we describe a generalization of the algorithm for computing short discrete
logarithms previously introduced by Ekerå \cite{ekera}.

To describe the algorithm we
first formally define the discrete logarithm problem and introduce some
associated notation.

\subsection{The discrete logarithm problem}
Let $\mathbb G$ under $\odot$ be a group of order $r$ generated by $g$, and let
\begin{align*}
x = [d] \, g = \underbrace{g \odot g \odot \cdots \odot g \odot g}_{d \text{ times}}.
\end{align*}
Given $x$, a generator $g$ and a description of $\mathbb G$ and $\odot$ the
discrete logarithm problem is to compute $d = \log_g x$.

The bracket notation that we have introduced above is commonly used in the
literature to denote repeated application of the group operation regardless of
whether the group is written multiplicatively or additively.

\subsection{Algorithm overview}
The generalized algorithm for computing short discrete logarithms consists of
two stages; an initial quantum stage and a classical post-processing stage.

The initial quantum stage is described in terms of a quantum algorithm, see section
\ref{section-quantum-part}, that upon input of $g$ and $x = [d] \, g$ yields
a pair $(j, k)$.
The classical post-processing stage is described in terms of a classical algorithm, see
section \ref{section-classical-part}, that upon input of $s \ge 1$ ``good''
pairs computes and returns $d$.

The parameter $s$ determines the number of good pairs $(j, k)$ required to
successfully compute $d$. It furthermore controls the sizes of the index
registers in the algorithm, and thereby the complexity of
executing the algorithm on a quantum computer and the sizes of and amount
of information on $d$ contained in the two components $j,k$ of each pair.

In the special case where $s = 1$ the generalized algorithm is identical to the
algorithm in \cite{ekera}. A single good pair then suffices to compute $d$.

By allowing $s$ to be increased, the generalized
algorithm enables a tradeoff to be made between the requirements imposed by the algorithm on
the quantum computer on the one hand, and the number of times it needs to
be executed and the complexity of the classical
post-processing stage on the other hand.

We think of $s$ as a small constant. Thus, when we analyze the complexity
of the algorithm, and
in particular the parts of the algorithm that are executed classically, we
can neglect constants that depend on $s$.

\subsection{The quantum algorithm}
\label{section-quantum-part}
Let $m$ be the smallest integer such that $0 < d < 2^m$ and let
$\ell$ be an integer close to  $m/s$.
Provided that the order $r$ of $g$ is at least $2^{\ell+m} + 2^{\ell} d$,
the quantum algorithm described in this section will upon input
of $g$ and $x = [d] \, g$ compute and output a pair $(j, k)$.

A set of such pairs is then input to the classical algorithm to recover $d$.

\begin{enumerate}
  \item Let
  \begin{align*}
  \ket{\Psi} = \frac{1}{\sqrt{2^{2\ell+m}}}
  \sum_{a \, = \, 0}^{2^{\ell+m}-1}
  \sum_{b \, = \, 0}^{2^{\ell}-1}
  \ket{a} \ket{b} \ket{0}.
  \end{align*}
  where the first and second registers are of length $\ell+m$ and $\ell$ qubits.

  \item
  Compute $[a] \, g \odot [-b] \, x$ and store the result in the third register
  \begin{align*}
  \ket{\Psi} &=
  \frac{1}{\sqrt{2^{2\ell + m}}}
  \sum_{a \, = \, 0}^{2^{\ell+m}-1}
  \sum_{b \, = \, 0}^{2^{\ell}-1}
  \ket{a, b, [a] \, g \odot [-b] \, x} \\
  &=
  \frac{1}{\sqrt{2^{2\ell+m}}}
  \sum_{a \, = \, 0}^{2^{\ell+m}-1}
  \sum_{b \, = \, 0}^{2^{\ell}-1}
  \ket{a, b, [a - bd] \, g}.
  \end{align*}

  \item
  Compute a QFT of size $2^{\ell+m}$ of the first register and a QFT of size
  $2^{\ell}$ of the second register to obtain
  \begin{align*}
  	\ket{\Psi} =
  	\:&\frac{1}{\sqrt{2^{2\ell+m}}}
    \sum_{a \, = \, 0}^{2^{\ell+m}-1}
    \sum_{b \, = \, 0}^{2^{\ell}-1}
    \ket{a, b, [a - b d] \, g}
  	\quad \xrightarrow{\text{QFT}} \quad \\
  	&\frac{1}{2^{2\ell+m}}
  	\sum_{a, \, j \, = \, 0}^{2^{\ell+m}-1}
  	\sum_{b, \, k \, = \, 0}^{2^{\ell}-1}
  	\e^{\, 2 \pi \imag \, (a j + 2^m b k) / 2^{\ell+m}}
  	\ket{j, k, [a \,  - \, b d] \, g}.
  \end{align*}

  \item
  Observe the system in a measurement to obtain $(j, k)$ and $[e] \, g$.
\end{enumerate}

\subsubsection{Analysis of the probability distribution}
\label{section-small-d-analysis-probability}
When the system above is observed, the state $\ket{j, k, [e] \, g}$, where
$e = a - bd$, is obtained with probability
\begin{align*}
  \frac{1}{2^{2(2\ell+m)}} \: \cdot \,
  \left| \:
  \sum_a
  \sum_b
  \exp \left[ \frac{2 \pi \imag}{2^{\ell+m}} \, (a j +  2^{m} \, b k) \, \right]
  \: \right|^2
\end{align*}
where the sum is over all pairs $(a, b)$ that produce this specific $e$.
Note that the assumptions that the order
$r \ge 2^{\ell+m} + 2^{\ell} d$
imply that no reduction modulo $r$ occurs when $e$ is computed.

In what follows, we re-write the above expression for the probability on a form
that is easier to use in practice.

\begin{enumerate}
  \item Since $e = a - bd$ we have $a = e + bd$ so the probability may be written
  \begin{align*}
    \frac{1}{2^{2(2\ell+m)}} \: \cdot \,
    \left| \:
    \sum_b
    \exp \left[ \frac{2 \pi \imag}{2^{\ell+m}} \, ((e + bd) j +  2^{m} b k) \, \right]
    \: \right|^2.
  \end{align*}
  where the sum is over all $b$ in
  $\{\, 0 \le b < 2^{\ell} \:|\: 0 \le a = e + bd < 2^{\ell+m} \,\}$.

  \item Extracting the term containing $e$ yields
  \begin{align*}
    \frac{1}{2^{2(2\ell+m)}} \: \cdot \,
    \left| \:
    \sum_b
    \exp \left[ \frac{2 \pi \imag}{2^{\ell+m}} \, b(dj +  2^{m} k) \, \right]
    \: \right|^2.
  \end{align*}

  \item Centering $b$ around zero yields
  \begin{align*}
    \frac{1}{2^{2(2\ell+m)}} \: \cdot \,
    \left| \:
    \sum_b
    \exp \left[ \frac{2 \pi \imag}{2^{\ell+m}} \, (b - 2^{\ell-1}) (dj +  2^{m} k) \, \right]
    \: \right|^2.
  \end{align*}

  \item This probability may be written
  \begin{align*}
    \frac{1}{2^{2(2\ell+m)}} \: \cdot \,
    \left| \:
    \sum_b
    \exp \left[ \frac{2 \pi \imag}{2^{\ell+m}} \, (b - 2^{\ell-1}) \{ dj +  2^{m} k \}_{2^{\ell+m}} \, \right]
    \: \right|^2.
  \end{align*}
  since adding or subtracting multiples of $2^{\ell+m}$ has no effect; it is equivalent
  to shifting the phase angle by a multiple of $2 \pi$.
\end{enumerate}

\subsubsection{The notion of a good pair $(j, k)$}
\label{section-notion-good-pair}
By claim \ref{claim-sum-unit-vectors} the sum above is large when
$| \, \{ dj + 2^m k \}_{2^{\ell+m}} \, | \le 2^{m - 2}$ since this condition implies
that the angle is less than or equal to $\pi / 4$.

This observation serves as our motivation for introducing the below notion
of a good pair, and for proceeding in the following sections to lower-bound the
number of good pairs and the probability of obtaining any specific good pair.

\begin{definition}
\label{definition-good-pair}
A pair $(j, k)$, where $j$ is an integer such that $0 \le j < 2^{\ell+m}$ and
is said to be {\em good} if
\begin{align*}
	\left| \, \{ dj + 2^m k \}_{2^{\ell+m}} \, \right| \le 2^{m-2}.
\end{align*}
\end{definition}
\noindent
Note that $j$ uniquely defines $k$ as $k$ gives
the $\ell$ high order bits of $dj$ modulo $2^{\ell+m}$.

\subsubsection{Lower-bounding the number of good pairs $(j, k)$}
\begin{lemma}
\label{lemma-count-good-pairs}
There are at least $2^{\ell+m-1}$ different $j$ such that
there is a $k$ such that $(j,k)$ is a good pair.
\end{lemma}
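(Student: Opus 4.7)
The plan is to fix $j$, identify the optimal $k$ for that $j$, and then count the $j$'s for which this optimal $k$ yields a good pair. This reduces the lemma to a counting problem in $\mathbb{Z}/2^m\mathbb{Z}$.

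First I would observe that as $k$ ranges over $\{0,1,\ldots,2^{\ell}-1\}$, the residues $dj + 2^m k$ modulo $2^{\ell+m}$ sweep out precisely the coset $(dj \text{ mod } 2^m) + 2^m \mathbb{Z}/2^{\ell+m}\mathbb{Z}$, an arithmetic progression of spacing $2^m$ inside $\mathbb{Z}/2^{\ell+m}\mathbb{Z}$. A short direct calculation shows that within this progression the element of smallest centered absolute value has magnitude exactly $|\{dj\}_{2^m}|$, attained at one of the two boundary choices of $k$. Hence there exists a $k$ making $(j,k)$ good if and only if $|\{dj\}_{2^m}| \le 2^{m-2}$.

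Next I would count the $j \in \{0,\ldots,2^{\ell+m}-1\}$ satisfying this condition. Writing $2^t = \gcd(d,2^m)$, the map $j \mapsto dj \text{ mod } 2^m$ is a group homomorphism from $\mathbb{Z}/2^{\ell+m}\mathbb{Z}$ onto the subgroup $2^t\mathbb{Z}/2^m\mathbb{Z}$ of multiples of $2^t$, and each element of the image has preimage of size exactly $2^{\ell+t}$. The condition $|\{dj\}_{2^m}| \le 2^{m-2}$ cuts out a symmetric window of $2^{m-1}+1$ integers around $0$ in $\mathbb{Z}/2^m\mathbb{Z}$, and a direct count of multiples of $2^t$ landing in this window gives $2^{m-1-t}+1$ whenever $t \le m-2$. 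Multiplying by the preimage size $2^{\ell+t}$ yields at least $2^{\ell+m-1}+2^{\ell+t} \ge 2^{\ell+m-1}$ good values of $j$.

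The main obstacle I anticipate is the degenerate boundary case $t = m-1$, where $d = 2^{m-1}$ and the image of the homomorphism is only $\{0, 2^{m-1}\}$; here the density estimate above breaks down. In this case only the value $0$ is within the window $|\{\cdot\}_{2^m}| \le 2^{m-2}$, but its preimage has size $2^{\ell+m-1}$, so the bound is met with equality. Since $0 < d < 2^m$ forces $t \le m-1$, these two cases are exhaustive and the lemma follows.
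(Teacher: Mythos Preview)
Your proof is correct and follows essentially the same approach as the paper: reduce the existence of a suitable $k$ to the condition $|\{dj\}_{2^m}|\le 2^{m-2}$, then count $j$'s by analyzing the image of $j\mapsto dj\bmod 2^m$ via the $2$-adic valuation of $d$ (your $t$ is the paper's $\kappa$), with the case $t=m-1$ handled separately. The only cosmetic difference is that you phrase the first reduction more explicitly as an arithmetic-progression argument, whereas the paper states the identity $|\{dj+2^mk\}_{2^{\ell+m}}|=|\{dj\}_{2^m}|$ directly.
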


\begin{proof}
For a good pair
\begin{eqnarray} \label{eq:dj}
\left| \, \{ dj + 2^m k \}_{2^{\ell+m}} \, \right| =
  \left| \, \{ dj \}_{2^{m}} \, \right|
  \, \le \,
  2^{m-2}
\end{eqnarray}
and for each $j$ that satisfies (\ref{eq:dj}) there
is a unique $k$ such that $(j,k)$ is good.

Let $2^\kappa$ be the greatest power of two that divides $d$.
Since $0 < d < 2^m$ it must be that $\kappa \le m-1$.
As $j$ runs through all integers $0 \le j < 2^{\ell+m}$, the function $dj \text{ mod } 2^{m}$ assumes the value of each multiple of $2^\kappa$ exactly $2^{\ell+\kappa}$ times.

\item
Assume that $\kappa = m - 1$. Then the only possible values are $0$ and
$2^{m-1}$. Only zero
gives rise to a good pair. With multiplicity there are
$2^{\ell + \kappa} = 2^{\ell+m - 1}$
integers $j$ such that $(j, k)$ is a good pair.

\item
Assume that $\kappa < m-1$.
Then only the $2 \cdot 2^{m-\kappa-2}+1$ values
congruent to values on $[-2^{m-2}, 2^{m-2}]$
are such that $| \, \{ dj \}_{2^{m}} \,| \le 2^{m-2}$. With
multiplicity $2^{\ell + \kappa}$ there are
$2^{\ell + \kappa} \cdot (2 \cdot 2^{m-\kappa-2} + 1) \ge 2^{\ell+m - 1}$
integers $j$ such that $(j, k)$ is a good pair.

\item
In both cases there are at least $2^{\ell+m - 1}$ good pairs and so the lemma follows.
\end{proof}

\subsubsection{Lower-bounding the probability of a good pair $(j, k)$}
To lower-bound the probability of a good pair we first need to lower-bound
the number of pairs $(a, b)$ that yield a certain $e$.

\begin{definition}
\label{definition-Te}
Let $T_e$ denote the number of pairs $(a, b)$ such that
\begin{align*}
e = a - bd
\end{align*}
where $a, b$ are integers on the intervals $0 \le a < 2^{\ell+m}$ and $0 \le b < 2^\ell$.
\end{definition}

\begin{claim}
\label{claim-interval-e}
  \begin{align*}
  | \, e = a - bd \, | < 2^{\ell+m}
  \end{align*}
\begin{proof}
  The claim follows from $0 \le a < 2^{\ell+m}$, $0 \le b < 2^\ell$ and
  $d < 2^{m}$.
\end{proof}
\end{claim}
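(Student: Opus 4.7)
The plan is to bound $e = a - bd$ from above and below separately, using the three given inequalities on $a$, $b$, and $d$. Since the statement is $|e| < 2^{\ell+m}$, it suffices to establish $-2^{\ell+m} < a - bd < 2^{\ell+m}$ and then take absolute values.

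For the upper bound, I would note that $b \ge 0$ and $d > 0$ (from the assumption $0 < d < 2^m$ made earlier), so $bd \ge 0$ and therefore $a - bd \le a$. Combining with $a < 2^{\ell+m}$ yields $a - bd < 2^{\ell+m}$. For the lower bound, I would use $a \ge 0$ to get $a - bd \ge -bd$, and multiply the strict inequalities $b < 2^\ell$ and $d < 2^m$ (both sides nonnegative) to obtain $bd < 2^{\ell} \cdot 2^m = 2^{\ell+m}$, hence $-bd > -2^{\ell+m}$ and therefore $a - bd > -2^{\ell+m}$.

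Combining the two bounds gives $|a - bd| < 2^{\ell+m}$, as claimed. There is no real obstacle here; the only thing worth watching is keeping every inequality strict (since both $a < 2^{\ell+m}$ and $bd < 2^{\ell+m}$ are strict, no boundary case arises) and remembering to invoke nonnegativity of $b$ and $d$ when dropping the $-bd$ term in the upper bound and when multiplying the bounds on $b$ and $d$ in the lower bound.
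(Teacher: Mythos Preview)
Your argument is correct and is exactly the elementary unpacking of the paper's one-line proof, which simply cites the three range constraints $0 \le a < 2^{\ell+m}$, $0 \le b < 2^\ell$, and $d < 2^m$ and asserts that the claim follows. There is no meaningful difference in approach; you have just made the two-sided bound explicit.
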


\noindent
\begin{claim}
\label{claim-sum-Te}
  \begin{align*}
    \sum_{e \, = \, -2^{\ell+m}}^{2^{\ell+m}-1} T_e = 2^{2 \ell + m}.
  \end{align*}
\begin{proof}
Since $a, b$ may independently assume $2^{\ell+m}$ and $2^\ell$ values, there
 are $2^{2\ell + m}$ distinct pairs $(a, b)$. From
this fact and claim \ref{claim-interval-e} the claim follows.
\end{proof}
\end{claim}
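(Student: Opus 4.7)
The plan is a straightforward double-counting argument. The quantity $T_e$ counts the number of ordered pairs $(a,b)$ in the box $[0, 2^{\ell+m}) \times [0, 2^\ell)$ whose image under the map $(a,b) \mapsto a - bd$ is exactly $e$. So summing $T_e$ over any range of $e$ that contains all possible image values gives the total number of pairs in the box, counted once each.

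First I would note that the box has $2^{\ell+m} \cdot 2^\ell = 2^{2\ell+m}$ pairs, and that the map $(a,b) \mapsto a - bd$ is well-defined (each pair produces a single $e$), so $\sum_e T_e$ over all integers equals $2^{2\ell+m}$. The only question is whether the restricted summation range $-2^{\ell+m} \le e \le 2^{\ell+m} - 1$ already captures every $e$ that arises. This is precisely what claim \ref{claim-interval-e} asserts: every realizable $e$ satisfies $|e| < 2^{\ell+m}$, hence lies in the stated range.

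Combining these two observations finishes the argument: no pair is omitted by the summation bounds, no pair is counted twice, so $\sum_{e=-2^{\ell+m}}^{2^{\ell+m}-1} T_e = 2^{2\ell+m}$. There is no real obstacle here; the only thing to be careful about is confirming that the summation window in the statement is the correct one and that claim \ref{claim-interval-e} matches it, which it does.
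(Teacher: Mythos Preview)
Your proposal is correct and follows exactly the same double-counting argument as the paper: count all $2^{\ell+m}\cdot 2^\ell = 2^{2\ell+m}$ pairs $(a,b)$, observe each contributes to exactly one $T_e$, and invoke Claim~\ref{claim-interval-e} to confirm the summation range covers every realizable $e$. The only difference is that you spell out the reasoning in slightly more detail than the paper does.
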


\begin{claim}
\label{claim-sum-Te-2}
\begin{align*}
\sum_{e \, = \, -2^{\ell+m}}^{2^{\ell+m}-1} T_e^2 \ge 2^{3\ell + m - 1}.
\end{align*}
\begin{proof}
The claim follows from the Cauchy–Schwarz inequality and claim \ref{claim-sum-Te} since
\begin{align*}
2^{2(2\ell+m)} =
\left( \sum_{e \, = \, -2^{\ell+m}}^{2^{\ell+m}-1} T_e \right)^2
\le\,
\left( \sum_{e \, = \, -2^{\ell+m}}^{2^{\ell+m}-1} 1^2 \right)
\left( \sum_{e \, = \, -2^{\ell+m}}^{2^{\ell+m}-1} T_e^2 \right).
\end{align*}
\end{proof}
\end{claim}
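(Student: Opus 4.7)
The plan is to derive this lower bound by a straightforward Cauchy--Schwarz argument, using Claim \ref{claim-sum-Te} to control the first moment of the $T_e$ and Claim \ref{claim-interval-e} to control the length of the support.

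First I would note that by Claim \ref{claim-interval-e}, every pair $(a,b)$ with $0 \le a < 2^{\ell+m}$, $0 \le b < 2^\ell$ produces a value $e = a - bd$ lying in the interval $-2^{\ell+m} \le e < 2^{\ell+m}$. Hence the nonzero terms in $\sum_e T_e^2$ range over at most $2 \cdot 2^{\ell+m} = 2^{\ell+m+1}$ distinct values of $e$.

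Next I would apply Cauchy--Schwarz in the form
\begin{align*}
\left( \sum_{e \, = \, -2^{\ell+m}}^{2^{\ell+m}-1} T_e \right)^{2}
\; \le \;
\left( \sum_{e \, = \, -2^{\ell+m}}^{2^{\ell+m}-1} 1^{2} \right)
\left( \sum_{e \, = \, -2^{\ell+m}}^{2^{\ell+m}-1} T_e^{2} \right).
\end{align*}
The left-hand side equals $(2^{2\ell+m})^2 = 2^{4\ell+2m}$ by Claim \ref{claim-sum-Te}, while the first factor on the right-hand side is exactly the support bound $2^{\ell+m+1}$. Rearranging immediately yields
\begin{align*}
\sum_{e \, = \, -2^{\ell+m}}^{2^{\ell+m}-1} T_e^{2}
\; \ge \;
\frac{2^{4\ell+2m}}{2^{\ell+m+1}}
\; = \;
2^{3\ell+m-1},
\end{align*}
as claimed.

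There is no real obstacle here; the only thing one must be slightly careful about is counting the number of integer $e$ in the summation range correctly (it is $2^{\ell+m+1}$, not $2^{\ell+m}$), since an off-by-one here would change the constant in the exponent. The rest is mechanical application of Cauchy--Schwarz to the two moments already tabulated in the preceding claims.
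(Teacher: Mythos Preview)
Your proof is correct and follows exactly the same approach as the paper: apply Cauchy--Schwarz to $\sum_e T_e$, use Claim~\ref{claim-sum-Te} for the left-hand side and the $2^{\ell+m+1}$ support count for the first factor, then rearrange. Your version is slightly more explicit about the support count via Claim~\ref{claim-interval-e}, but otherwise it matches the paper's argument line for line.
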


\noindent
We are now ready to demonstrate a lower-bound on the probability of obtaining a
good pair using the above definition and claims.

\begin{lemma}
\label{lemma-probability-good-pair}
The probability of obtaining any specific good pair $(j, k)$ from a single
execution of the algorithm in section \ref{section-quantum-part} is at least
$2^{-m-\ell-2}$.
\end{lemma}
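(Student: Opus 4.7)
The plan is to start from the final probability expression derived in Section \ref{section-small-d-analysis-probability}, namely that observing $(j, k, [e]\, g)$ has probability
\begin{align*}
\frac{1}{2^{2(2\ell+m)}} \cdot
\left| \, \sum_b \exp\!\left[\frac{2\pi\imag}{2^{\ell+m}}\, (b - 2^{\ell-1})\,\{dj + 2^m k\}_{2^{\ell+m}}\right] \, \right|^2,
\end{align*}
where the sum ranges over the $T_e$ values of $b$ that are compatible with $e$. I would then verify that for a good pair the phase angle is small enough that Claim \ref{claim-sum-unit-vectors} applies. Since $0 \le b < 2^\ell$ implies $|b - 2^{\ell-1}| \le 2^{\ell-1}$, and since goodness gives $|\{dj + 2^m k\}_{2^{\ell+m}}| \le 2^{m-2}$, the absolute phase is bounded by $(2\pi/2^{\ell+m}) \cdot 2^{\ell-1} \cdot 2^{m-2} = \pi/4$, exactly as required.

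Having verified the hypothesis of Claim \ref{claim-sum-unit-vectors}, I would apply it to conclude that for each valid $e$ the squared sum is at least $T_e^2 / 2$, hence the probability of observing the triple $(j,k,[e]\,g)$ is at least $T_e^2 / (2 \cdot 2^{2(2\ell+m)})$. The probability of observing the pair $(j,k)$ alone, which is what the lemma asks for, is obtained by summing over all admissible $e$.

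The remaining step is to lower-bound $\sum_e T_e^2$, which is exactly the content of Claim \ref{claim-sum-Te-2}: this sum is at least $2^{3\ell + m - 1}$. Dividing by $2 \cdot 2^{2(2\ell+m)} = 2^{4\ell + 2m + 1}$ yields a probability of at least $2^{3\ell + m - 1 - 4\ell - 2m - 1} = 2^{-\ell - m - 2}$, which is the claimed bound.

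I expect the only real subtlety to be the bookkeeping of exponents in the phase-angle calculation; everything else is an immediate assembly of earlier claims, with Claim \ref{claim-sum-unit-vectors} providing the constructive-interference estimate and Claim \ref{claim-sum-Te-2} providing the Cauchy--Schwarz estimate on $\sum_e T_e^2$. No additional ideas appear to be needed.
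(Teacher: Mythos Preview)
Your proposal is correct and follows essentially the same argument as the paper: bound the phase angle by $\pi/4$ using the definition of a good pair, apply Claim~\ref{claim-sum-unit-vectors} to get $T_e^2/2$ per value of $e$, then sum over $e$ and invoke Claim~\ref{claim-sum-Te-2}. The arithmetic and the structure match the paper's proof line for line.
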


\begin{proof}
For a good pair
\begin{align*}
	\left| \, \frac{2 \pi}{2^{\ell+m}} \, (b - 2^{\ell-1}) \, \{ dj + 2^m k \}_{2^{\ell+m}} \, \right|
  \le \frac{2 \pi}{2^{\ell+2}} \, \left| \, b - 2^{\ell-1} \, \right |
  \le \frac{\pi}{4}
\end{align*}
for any integer $b$ on the interval $0 \le b < 2^\ell$. It therefore follows
from claim \ref{claim-sum-unit-vectors}
that the probability of observing $(j, k)$ and $[e] \, g$
is at least
\begin{align*}
	\frac{1}{2^{2(2\ell+m)}} \cdot
	\left| \:
	\sum_{b}
	\exp \left[ \frac{2 \pi \imag}{2^{2 \ell}} \, (b - 2^{\ell-1}) \, \{ dj + 2^m k \}_{2^{\ell+m}} \right]
	\: \right|^2 \ge \frac{T_e^2}{2 \cdot 2^{2 (2\ell+m)}}
\end{align*}
Summing this over all $e$ and using claim \ref{claim-sum-Te-2} yields
\begin{align*}
\sum_{e \, = \, -2^{\ell+m}}^{2^{\ell+m}-1} \frac{T_e^2}{2 \cdot 2^{2 (2\ell+m)}} \ge 2^{-m-\ell-2}
\end{align*}
from which the lemma follows.
\end{proof}
We note that by Lemma~\ref{lemma-count-good-pairs} and
Lemma~\ref{lemma-probability-good-pair} the probability of the algorithm
yielding a good pair as a result of a single execution is at least $2^{-3}$.
In the next section, we describe how to compute $d$ from a set of $s$
good pairs.

\subsection{Computing $d$ from a set of $s$ good pairs}
\label{section-classical-part}
In this section, we specify a classical algorithm that upon input
of a set of $s$ distinct good pairs $\{ (j_1, k_1), \: \dots, \: (j_s, k_s) \}$, that result
from multiple executions of the algorithm in section \ref{section-quantum-part},
computes and outputs $d$.

The algorithm uses lattice-based techniques. To
introduce the algorithm, we first need to define the lattice $L$.

\begin{definition}
\label{definition-L}
Let $L$ be the integer lattice generated by the row span of
\begin{align*}
\left[
\begin{array}{ccccc}
 j_1 &  j_2 & \cdots &  j_s & 1 \\
2^{\ell+m} & 0 & \cdots & 0 & 0 \\
0 & 2^{\ell+m} & \cdots & 0 & 0 \\
\vdots & \vdots & \ddots & \vdots & \vdots \\
0 & 0 & \cdots & 2^{\ell+m} & 0
\end{array}
\right].
\end{align*}
\end{definition}

\noindent
The algorithm proceeds as follows to recover $d$ from $\{ (j_1, k_1), \: \dots, \: (j_s, k_s) \}$.
\begin{enumerate}
\item Let $\vec v = ( \,\{ -2^m k_1 \}_{2^{\ell+m}}, \: \dots, \:  \,\{ -2^m k_s \}_{2^{\ell+m}}, \: 0 ) \in \mathbb Z^{s+1}$.

For all vectors $\vec u \in L$ such that
\begin{align*}
  | \, \vec u - \vec v \, | < \sqrt{s/4+1} \cdot 2^m
\end{align*}
test if the last component of $\vec u$ is $d$. If so return $d$.

This test may be performed by checking if $x = [d] \, g$.

\item If $d$ is not found in step 1 or the search is infeasible the algorithm fails.
\end{enumerate}
As $s$ is a constant, all vectors close to $\vec v$ can be found efficiently.
We return to the problem of there possibly being many close vectors in
Lemma~\ref{lemma:lattice} below.

\subsubsection{Rationale and analysis}
\label{section-rationale-analysis}
For any $m_1, \: \dots, \: m_s \in \ZZ$ the vector
\begin{align*}
  \vec u = (\{ dj_1 \}_{2^{\ell+m}} + m_1 2^{\ell+m}, \: \dots, \: \{ dj_s \}_{2^{\ell+m}} + m_s 2^{\ell+m}, \, d ) \in L.
\end{align*}
The above algorithm performs an exhaustive search of all vectors in $L$ at
distance at most $\sqrt{s/4+1} \cdot 2^m$ from $\vec v$ to find $\vec u$ for some $m_1, \: \dots, \: m_s$. It then
recovers $d$ as the second component of $\vec u$. The search will succeed in finding $\vec u$ since
\begin{align*}
\left| \, \vec u - \vec v \, \right|
&= \sqrt{\: d^2 + \sum_{i \, = \, 1}^s \left( \{ dj_i \}_{2^{\ell+m}} + m_i \, 2^{\ell+m} - \{ -2^m k_i \}_{2^{\ell+m}} \right)^2} \\
&= \sqrt{\: d^2 + \sum_{i \, = \, 1}^s \left( \{ dj_i + 2^m k_i \}_{2^{\ell+m}} \right)^2} < \sqrt{s/4+1} \cdot 2^m
\end{align*}
since $0 < d < 2^m$ and $|\{ dj + 2^m k \}_{2^{\ell+m}}| \le 2^{m-2}$ by the definition
of a good pair, and since $m_1, \: \dots, \: m_s$ may be freely selected to obtain equality.

Whether the search is computationally feasible depends on the number of vectors
in $L$ that lie within distance $\sqrt{s/4+1} \cdot 2^m$ of $\vec v$. This
number is related to the norm of the shortest vector in the lattice.

Note that the determinant of $L$ is $2^{(\ell+m)s}\approx 2^{m(s+1)}$. As the lattice is
$s+1$-dimensional we would expect the shortest vector to be of length about $2^m$.
This is indeed true with high probability.

\begin{lemma}\label{lemma:lattice}
The probability that $L$ contains a vector $\vec u= (u_1, \ldots u_{s+1})$
with $| \, u_i \, | < 2^{m-3}$ for $1\leq i \leq s+1$ is bounded by $2^{-s-1}$.
\end{lemma}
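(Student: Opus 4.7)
The plan is to parametrize lattice vectors and then bound the expected number of nonzero short vectors via Markov's inequality. Every $\vec u \in L$ has the form
\begin{align*}
\vec u
\;=\;
t\,(j_1, \ldots, j_s, 1)
\;+\;
\sum_{i=1}^s m_i \cdot 2^{\ell+m}\, e_i
\;=\;
(tj_1 + m_1 2^{\ell+m},\: \ldots,\: tj_s + m_s 2^{\ell+m},\: t)
\end{align*}
for integers $t, m_1, \ldots, m_s$, so $u_{s+1} = t$. The constraint $|u_{s+1}| < 2^{m-3}$ forces $|t| < 2^{m-3}$, and $t=0$ contributes only the zero vector (any nonzero $m_i$ would force $|u_i| \ge 2^{\ell+m}$). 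For each nonzero $t$, the remaining constraint $|u_i| < 2^{m-3}$ is equivalent to $|\{t j_i\}_{2^{\ell+m}}| < 2^{m-3}$, with $m_i$ then uniquely determined. Letting $N$ count the nonzero $t$ with $|t| < 2^{m-3}$ producing a short vector, it suffices to show $E[N] \le 2^{-s-1}$, since $\Pr[N \ge 1] \le E[N]$.

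By independence of the $s$ executions of the quantum algorithm,
\begin{align*}
E[N] \;=\; \sum_{0 < |t| < 2^{m-3}} p_t^s
\quad \text{where} \quad
p_t = \Pr\bigl[\, |\{ tj \}_{2^{\ell+m}}| < 2^{m-3} \,\bigr].
\end{align*}
To bound $p_t$ I first count $N_t = \#\{\, j \in \{0, \ldots, 2^{\ell+m}-1\} : |\{tj\}_{2^{\ell+m}}| < 2^{m-3} \,\}$: since $tj \bmod 2^{\ell+m}$ runs over multiples of $\gcd(t, 2^{\ell+m}) = 2^{v_2(t)}$, this yields $N_t \le 2^{m-2} + 2^{v_2(t)}$. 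Second, I derive a uniform upper bound of the form $P(j,k) \le 2^{-\ell-m}$ on the quantum probability of any specific pair, by applying the trivial estimate $|\sum_b \cdots|^2 \le T_e^2$ in the formula of section~\ref{section-small-d-analysis-probability} together with $\sum_e T_e^2 \le 2^\ell \sum_e T_e = 2^{3\ell+m}$. Conditioning on good pairs costs only the factor $\Pr[\text{good}]^{-1} \le 2^{3}$, giving $p_t \le N_t \cdot 2^{-\ell-m+O(1)}$.

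Grouping $t$ by $v = v_2(t)$, which admits at most $2^{m-2-v}$ nonzero values at each level, and summing a geometric series in $v$, the resulting bound on $E[N]$ is of the form $2^{m - s\ell + O(s)}$, which falls below $2^{-s-1}$ once $\ell$ is chosen slightly larger than $m/s$ — a freedom permitted by the algorithm's loose specification of $\ell$. The hard part will be the pointwise upper bound on the distribution of $j$: the paper only establishes lower bounds (see Lemma~\ref{lemma-probability-good-pair}), whereas here an upper bound is essential, and any constant slack in that bound propagates through the $s$-th power and dictates exactly how much $\ell$ must exceed $m/s$ for the target $2^{-s-1}$ to be attained.
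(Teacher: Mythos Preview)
Your approach differs from the paper's in a meaningful way. The paper does not touch the quantum output distribution at all: it treats $j_1,\ldots,j_s$ as if drawn uniformly from the set of good $j$'s and runs a pure counting argument. For each candidate short vector $\vec u$ whose last coordinate has $2$-adic valuation $\kappa$ (which forces $2^\kappa\mid u_i$ for every $i$ if $\vec u$ is to lie in any such lattice), exactly $2^{s\kappa}$ tuples $(j_1,\ldots,j_s)\in[0,2^{\ell+m})^s$ place $\vec u$ in $L$; summing over all $\vec u$ with $|u_i|<2^{m-3}$ and dividing by the at least $2^{s(\ell+m-1)}$ good tuples gives the bound $2^{-s-1}$ on the nose as soon as $s\ell\ge m$.

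Your route --- fix $t=u_{s+1}$, bound each coordinate's hit probability $p_t$ under the \emph{actual} quantum law via a pointwise upper bound $P(j\mid\text{good})\le 2^{-\ell-m+O(1)}$, and apply Markov --- is more scrupulous about the probability model, and your observation that the paper only ever proves \emph{lower} bounds on the quantum weight of a good pair is well taken: the paper's division by the count of good tuples is tacitly a uniformity assumption. But you pay for that scrupulousness in the constants. Your own estimates give $E[N]\le 2^{\,m-2+2s-s\ell}$, which at $\ell\approx m/s$ is $2^{2s-2}$, not $2^{-s-1}$; hitting the stated bound would require $\ell\ge m/s+3$. So what you outline proves a variant of the lemma with a shifted $\ell$, not Lemma~\ref{lemma:lattice} as stated. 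For the algorithm that variant is perfectly adequate, but as a proof of the lemma your argument has a gap exactly where you anticipated it: the constant slack in the pointwise upper bound, raised to the $s$th power, swamps the target.
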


\begin{proof}
Take any integer $\vec u$ with all coordinates strictly bounded by $2^{m-3}$.

If $2^\kappa$ is the largest
power of two that divides $u_{s+1}$ then  $u_i$ must also
be divisible by $2^{\kappa}$ for $\vec u$ to belong to any lattice in
the family. By family we mean all lattices on the same form and degree as $L$, see definition \ref{definition-L}.  If it this is true for all $i$ then  $\vec u$ belongs to $L$ for
$2^{s \kappa}$ different values of $(j_i)_{i=1}^s$.

There
are $2^{(m-2-\kappa) (s+1)}$  vectors $\vec u$ with all coordinates divisible
by $2^{ \kappa}$ and bounded in absolute value by $2^{m-3}$.
We conclude that the total number of lattices $L$ that contain
such a short vector is bounded by
$$
\sum_\kappa 2^{(m-2-\kappa) (s+1)} 2^{\kappa  s} \leq
2^{1+(m-2) (s+1)}.
$$
As the number of $s$ tuples of good $j$ is at least
$2^{s(\ell +m -1)}$, the lemma follows.
\end{proof}
Lemma~\ref{lemma:lattice} shows that with good probability
the number of lattice points that
$| \, \vec u - \vec v \, | < \sqrt{s/4+1}$ is a
constant that only depends on $s$ and thus we can efficiently
find all such vectors.

\subsection{Building a set of $s$ good pairs}
The probability of a single execution of the quantum
algorithm in section~\ref{section-quantum-part} yielding a good pair is
at least $2^{-3}$ by Lemma~\ref{lemma-count-good-pairs} and
Lemma~\ref{lemma-probability-good-pair}.
Hence, if we execute the quantum algorithm $t=8s$ times, we obtain
a set of $t$ pairs that we expect contains at least $s$ good pairs.

In theory, we may then recover $d$ by executing the classical algorithm in
section~\ref{section-classical-part} with respect to all ${{t}\choose{s}}$
subsets of $s$ pairs selected from this set. Since $s$ is a constant, this approach
implies a constant factor overhead in the classical part of the algorithm. It
does not affect the quantum part of the algorithm. We summarize these
ideas in Theorem \ref{theorem-main} below.

In practice, however, we suspect that it may be easier to recover $d$. First
of all, we remark that we have only established a lower bound on the probability that a good pair is
yielded by the algorithm. This bound is not tight and we expect the actual
probability to be higher than is indicated by the bound.

Secondly, we have only analyzed the probability of the classical algorithm
in section~\ref{section-classical-part} recovering $d$ under the assumption
that all $s$ pairs in the set input are good. It might however well turn out to be true
that the algorithm will succeed in recovering $d$ even if not all pairs in the input set are good.

\subsection{Main result}
In this subsection we summarize the above discussion in a main theorem. Again, we
stress that the approach outlined in the theorem is conservative.
\begin{theorem}
  \label{theorem-main}

  Let $d$ be an integer on $0 < d < 2^m$, let $s \ge 1$ be a fixed
   integer,
  let $\ell$ be an integer close to $m/s$ and let $g$ be a generator of a
  finite cyclic group of order $r \ge 2^{\ell+m} + 2^\ell d$.

  Then there exists a quantum algorithm that yields a pair as output
  when executed with $g$ and $x = [d] \,g$ as input.
  The main operation in this algorithm is an
  exponentiation of $g$ in superposition to an exponent of length $\ell + m$ bits.

  If this algorithm is executed $\mathcal O(s)$ times to yield a set of
  pairs $\mathcal S$, then there exists a polynomial time classical
  algorithm that computes $d$ if executed with
  all unordered subsets of $s$ pairs from $\mathcal S$ as input.
\end{theorem}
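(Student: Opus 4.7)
The theorem is essentially a summary of the section, so the plan is to assemble the ingredients that are already in place and check that each clause of the statement follows from a specific earlier result.

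First, I would invoke the quantum algorithm of Section~\ref{section-quantum-part} verbatim: it takes $g$ and $x=[d]g$ as input and outputs a pair $(j,k)$ together with a group element $[e]g$ that is discarded. Its main operation is the computation of $[a]g\odot[-b]x=[a-bd]g$ in superposition, and by Claim~\ref{claim-interval-e} together with $r\geq 2^{\ell+m}+2^\ell d$ the exponent satisfies $|a-bd|<2^{\ell+m}$ without any reduction modulo $r$, so the exponentiation uses an exponent of length $\ell+m$ bits as claimed.

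Next, I would combine Lemma~\ref{lemma-count-good-pairs} and Lemma~\ref{lemma-probability-good-pair} to conclude that a single execution yields a good pair with probability at least $2^{\ell+m-1}\cdot 2^{-m-\ell-2}=2^{-3}$. Running the quantum algorithm $t=8s=\mathcal{O}(s)$ times produces a multiset $\mathcal{S}$ whose expected number of good pairs is at least $s$; a standard Chernoff bound (or, at the crudest, Markov combined with a constant number of independent repetitions of the whole procedure) then ensures that with probability bounded away from zero $\mathcal{S}$ contains at least $s$ good pairs. Since $s$ is constant, this success probability can be amplified to any desired level by $\mathcal{O}(1)$ repetitions without affecting the asymptotics.

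Third, I would feed every one of the $\binom{t}{s}=\mathcal{O}(1)$ unordered $s$-subsets of $\mathcal{S}$ to the classical algorithm of Section~\ref{section-classical-part}. For the subset consisting of $s$ good pairs, the analysis of Section~\ref{section-rationale-analysis} exhibits a lattice vector $\vec u\in L$ whose last coordinate is $d$ and which satisfies $|\vec u-\vec v|<\sqrt{s/4+1}\cdot 2^m$, so any enumeration of lattice points within that radius will find it and the test $x=[d]g$ will confirm the recovery. The only thing that needs checking is that the enumeration is efficient, and for this I would appeal to Lemma~\ref{lemma:lattice}: with probability at least $1-2^{-s-1}$ over the $(j_i)$ the lattice $L$ contains no nonzero vector with every coordinate of absolute value below $2^{m-3}$, which by a standard packing argument forces the number of lattice points within the search radius to be a constant depending only on $s$. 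Hence a polynomial-time close-vector procedure (e.g.\ Babai's nearest-plane algorithm combined with bounded enumeration in the constant-dimensional lattice) suffices.

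The main technical obstacle has already been absorbed into Lemma~\ref{lemma:lattice}; what remains for the proof of the theorem is the probabilistic bookkeeping that glues the per-execution success probability $2^{-3}$ together with the lattice event of probability $1-2^{-s-1}$ and the $\binom{t}{s}=\mathcal{O}(1)$ enumeration of candidate subsets, so as to conclude both that the quantum stage is executed $\mathcal{O}(s)$ times and that the classical stage runs in polynomial time on each subset.
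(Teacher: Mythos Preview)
Your proposal is correct and follows essentially the same route as the paper, which simply states that the theorem ``follows from the above discussion'' and, in the preceding subsection, spells out exactly the plan you give: run the quantum algorithm $t=8s$ times, use Lemmas~\ref{lemma-count-good-pairs} and~\ref{lemma-probability-good-pair} to get a good pair with probability at least $2^{-3}$ per run, and then apply the classical lattice procedure of Section~\ref{section-classical-part} (with Lemma~\ref{lemma:lattice} controlling the enumeration) to each of the $\binom{t}{s}$ subsets. One small imprecision: the $\ell+m$-bit exponent length comes directly from the size of the first index register holding $a$, whereas the condition $r\ge 2^{\ell+m}+2^\ell d$ and Claim~\ref{claim-interval-e} are needed not for the exponent length but to guarantee that no reduction modulo $r$ occurs in $e=a-bd$, so that the probability analysis is valid.
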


\noindent
The proof of Theorem \ref{theorem-main} follows from the above discussion.

Note that the order $r$ of the group need not be explicitly known. It suffices
that the above requirement on $r$ is met. Note furthermore
that it must be possible to implement the group operation efficiently on a quantum
computer.

\subsection{Implementation remarks}
We have described the above algorithm in terms of it using two index
registers.

Similarly, we have described the algorithm in terms of the quantum system being
initialized, of a quantum circuit then being executed and of the quantum system
finally being observed in a measurement. However, this is not necessarily the manner in which the
algorithm would be implemented in practice on a quantum computer.

For example, Mosca and Ekert \cite{mosca-ekert} have described optimizations of
Shor's general algorithm for computing discrete logarithm that allow the index
registers to be truncated. These optimizations,
alongside other optimizations of Shor's original algorithm for computing
discrete logarithms, may in many cases be applicable also to our algorithm for
computing short discrete logarithms. This is due to the fact that the quantum
stages are fairly similar.

Depending on the specific architecture of the
quantum computer on which the algorithm is to be implemented, it is likely that
different choices will have to be made with respect to how the implementation
is designed and optimized.

In this paper we therefore describe our algorithm in the simplest possible
manner, without taking any of these optimizations into account.

\section{Applications}
\label{section-applications}
In this section, we describe applications for the generalized algorithm for
computing short discrete logarithms introduced in the previous section.

\subsection{Computing short discrete logarithms}
Quantum algorithms for computing short discrete logarithms may be used to attack
certain instantiations of asymmetric cryptographic schemes that rely on the
computational intractability of this problem.

A concrete example of such an application is to attack Diffie-Hellman over
finite fields when safe prime groups are used in conjunction with short exponents.

The existence of efficient specialized algorithms for computing short discrete
logarithms on quantum computers should be taken into account when selecting
and comparing domain parameters for asymmetric cryptographic schemes that rely
on the computational intractability of the discrete logarithm problem.

For further details, the reader is referred to the extended rationale in
\cite{ekera} and to the references to the literature provided in that paper.

\subsection{Factoring RSA integers}
In this section we describe how the RSA integer factoring problem may be
recast as a short discrete logarithm problem by using ideas from
Håstad et al. \cite{hastad}, and the fact
that our algorithm does not require the group order to be known.

This immediately gives rise to an algorithm for
factoring RSA integers that imposes smaller requirements on the quantum
computer than Shor's general factoring algorithm.

\subsubsection{The RSA integer factoring problem}
Let $p$ and $q \neq p$ be two random odd primes such that $2^{n-1} < p, q < 2^n$.
The RSA integer factoring problem is then to factor $N=pq$ into $p$ and $q$.

The RSA integer factoring problem derives its name from Rivest, Shamir and Adleman who proposed
to base the widely deployed RSA cryptosystem on the computational intractability of
the RSA integer factoring problem.

\subsubsection{The factoring algorithm}
Consider the multiplicative group $\mathbb Z^*_{N}$ to the ring of
integers modulo $N$. This group has order $\phi(N) = (p-1)(q-1)$.
Let $\mathbb G$ be some cyclic subgroup to $\mathbb Z^*_{N}$.

Then  $\mathbb G$ has order $\phi(N) / t$ for some $t \mid \phi(N)$
such that $t \ge \gcd(p-1, q-1)$. In what follows below, we assume that $\phi(N) / t > (p + q - 2)/2$.

\begin{enumerate}
\item
\label{factoring-step-select-g}
Let $g$ be a generator of $\mathbb G$. Compute $x = g^{(N-1)/2}$. Then $x \equiv g^{(p+q-2)/2}$.

\item
\label{factoring-step-compute-dlog}
Compute the short discrete logarithm $d = (p + q - 2)/2$ from $g$ and $x$.

\item
Compute $p$ and $q$ by solving the quadratic equation
\begin{align*}
 N = (2d - q + 2)q = 2(d + 1)q - q^2
\end{align*}
where we use that $2d+2 = p + q$.   This yields
\begin{align*}
  p, q = c \pm \sqrt{c^2 - N}
  \quad \text{ where } \quad
  c = d + 1.
\end{align*}
We obtain $p$ or $q$ depending on the choice of sign.
\end{enumerate}
To understand why we obtain a short logarithm, note that
\begin{align*}
  N - 1 = pq - 1 = (p-1) + (q-1) + (p-1)(q-1)
\end{align*}
from which it follows that
$(N - 1)/2 \equiv (p + q - 2)/2 \text{ mod } \phi(N)/t$
provided that the above assumption that $\phi(N) / t > (p + q - 2)/2$ is met.

The only remaining difficulties are the selection of the generator in step \ref{factoring-step-select-g}
and the computation of the short discrete logarithm in step \ref{factoring-step-compute-dlog}.

\subsubsection{Selecting the generator $g$}
We may pick any cyclic subgroup
$\mathbb G$ to $\mathbb Z_{N}^*$ for as long as its order $\phi(N)/t$ is
sufficiently large. It suffices that
$\phi(N)/t > (p + q - 2)/2$ and that the discrete logarithm can be computed,
see section \ref{section-remarks-computing-short-discrete-logarithm} below
for more information.

This implies that we may simply select an element $g$ uniformly at random
on the interval $1 < g < N-1$ and use it as the generator in step
\ref{factoring-step-select-g}.

\subsubsection{Computing the short discrete logarithm}
\label{section-remarks-computing-short-discrete-logarithm}
To compute the short discrete logarithm in step \ref{factoring-step-compute-dlog},
we use the algorithm in section
\ref{section-computing-short-discrete-logarithms}.
This algorithm requires that the order
\begin{align*}
  \phi(N) / t \ge 2^{\ell + m} + 2^\ell d
  \quad\Rightarrow\quad
  \phi(N)/t \ge 2^{\ell + m + 1}
\end{align*}
where we have used that $0 < d < 2^m$. We note that
\begin{align*}
  2^{n} \le d = (p + q - 2)/2 < 2^{n+1}
  \quad\Rightarrow\quad
  m = n + 1.
\end{align*}
Furthermore, we note that $\phi(N) = (p-1)(q-1) \ge 2^{2(n-1)}$ which implies
\begin{align*}
\phi(N) / t \ge 2^{2(n-1)} / t \ge 2^{\ell+m+1} = 2^{\ell+n+2}
\quad\Rightarrow\quad
t < 2^{2(n-1)-(n + \ell + 2)} = 2^{n - \ell - 4}.
\end{align*}

Recall that $\ell = m / s = (n + 1)/s$ where $s \ge 1$. For
random $p$ and $q$, and a randomly
selected cyclic subgroup to $\mathbb Z_{N}^*$, the requirement
$t < 2^{n - \ell - 4}$ is hence met with overwhelming probability for any $s > 1$.

We remark that further optimizations are possible. For instance
the size of the logarithm may be reduced by computing $x = g^{(N-1)-2^{n}}$
since $p,q > 2^{n-1}$.

\subsubsection{Generalizations}
We note that the algorithm proposed in this section can be generalized.

In particular, we have assumed above that the two factors are of the same length
in bits as is the case for RSA integers. This requirement can be relaxed. As
long as the difference in length between the two factors is not too great, the
above algorithm will give rise a short discrete logarithm that may be computed
using our generalized algorithm in section \ref{section-computing-short-discrete-logarithms}.

\subsection{Order finding under side information}
In this section, we briefly consider the problem of computing the order of
a cyclic group $\mathbb G$ when a generator $g$ for the group is available
and when side information is available in the form of an estimate of the group order.

\subsubsection{The algorithm}
Let $\mathbb G$ be a cyclic group of order $r$. Let $r_0$ be a known
approximation of the order such that $0 \le r - r_0 < 2^m$. The problem of
computing the order $r$ under the side information $r_0$ may then be recast
as a short discrete logarithm problem:

\begin{enumerate}
\item Let $g$ be a generator of $\mathbb G$. Compute $x = g^{-r_0}$. Then $x \equiv g^{r - r_0}$.

\item Compute the short discrete logarithm $d = r - r_0$ from $g$ and $x$.

\item Compute the order $r = d + r_0$.
\end{enumerate}

\section{Summary and conclusion}
\label{section-summary-conclusion}
In this paper we have generalized the quantum algorithm for computing short discrete
logarithms previously introduced by Ekerå \cite{ekera} so as to allow for
various tradeoffs between the number of times that the algorithm need be executed
on the one hand,
and the complexity of the algorithm and the requirements it imposes on the
quantum computer on the other hand.

Furthermore, we have described applications for algorithms for computing short
discrete logarithms. In particular, we have shown how other important problems such as
those of factoring RSA integers and of finding the order of groups under
side information may be recast as short discrete logarithm problems.
This immediately gives rise to an algorithm for factoring RSA integers that
is less complex than Shor's general factoring algorithm in the sense that it
imposes smaller requirements on the quantum computer.

In both our algorithm and Shor's algorithm, the main hurdle is to compute
a modular exponentiation in superposition. When factoring an $n$ bit integer,
the exponent is of length $2n$ bits in Shor's algorithm, compared to
slightly more than $n/2$ bits in our algorithm. We have made essentially two
optimizations that give rise to this improvement.

First, we gain a factor of two by re-writing the factoring problem as a short discrete
logarithm problem and solving it using our algorithm for computing short
discrete logarithms. One
way to see this is that we know an approximation $N$ of the order
$\phi(N)$. This gives us a short discrete logarithm problem and our
algorithm for solving it does not require the order to be known beforehand.

Second, we gain a factor of two by executing
the quantum algorithm multiple times to yield a set of partial results. We then
recover the discrete logarithm $d$ from this set in a classical post-processing step. The
classical algorithm uses lattice-based techniques. It constructs a
lattice $L$ and a vector $\vec v$ from the set of partial results and recovers
 $d$ by exploring vectors in $L$ close to $\vec v$.

\subsection{Remarks on generalizations of these techniques}
The second optimization above may seemingly be generalized and applied to
other quantum algorithms such as for example
Shor's
algorithm for factoring general integers. This allows a factor two to be
gained.

We have not yet analyzed this case in detail but the idea is basically to exponentiate a random group element to an exponent of length
$\ell + m$ bits in superposition, where the order $r$ of the element is such that $r < 2^m$ and
$\ell = m / s$.

The quantum algorithm is executed multiple times to yield partial results in the form of integers $j_1, j_2, \dots$
that are such that $| \, \{ j_i \, r \}_{2^{\ell + m}} \,| \le 2^{m-2}$.

In the classical post-processing step lattice-based techniques are then used to
extract the order. The lattice $L$ is on the same form as in our algorithm, but
$\vec v$ is now the zero vector, and we hence seek a short non-zero vector in $L$.
The last component of this vector is $r$.

\section*{Acknowledgments}
Support for this work was provided by the Swedish NCSA, that
is a part of the Swedish Armed Forces, and by the Swedish Research Council (VR).


\begin{thebibliography}{99}
\bibitem{hirvensalo} M. Hirvensalo, \emph{\q{Quantum Computing}}, 2nd edition, Natural Computing Series, Springer Verlag, 2004.

\bibitem{ekera} M. Ekerå, \emph{\q{Modifying Shor's algorithm to compute short discrete logarithms}}, in IACR ePrint Archive, report \href{https://eprint.iacr.org/2016/1128}{2016/1128}, 2016.

\bibitem{hastad} J. Håstad, A. Schrift, A. Shamir, \emph{\q{The Discrete Logarithm Modulo a Composite Hides O(n) bits}}, in Journal of Computer and System Science, Vol 47, No 3, 1993, pp 376-404.

\bibitem{mosca-ekert} M. Mosca, A. Ekert,
\q{\emph{The Hidden Subgroup Problem and Eigenvalue Estimation on a Quantum Computer}},
in proceeding from the first NASA International Conference,
Quantum Computing and Quantum Communications, volume 1509, 1999, pp. 174–188.

\bibitem{shor1994} P. W. Shor, \emph{\q{Algorithms for Quantum Computation: Discrete Logarithms and Factoring}}, in proceeding from the 35th Annual Symposium on Foundations of Computer Science, Santa Fe, NM, November 20–22, 1994, IEEE Computer Society Press, pp. 124–134.

\bibitem{shor1997} P. W. Shor, \emph{\q{Polynomial-time algorithms for prime factorization and discrete logarithms on a quantum computer}}, in SIAM Journal of Computing, volume 26, no 5, 1997, pp. 1484-1509.
\end{thebibliography}
\end{document}